\begin{document}

\title{Existence of Secure Equilibrium\\ in Multi-Player Games with Perfect Information} 

\author{Julie De Pril\inst{1} \and J\'{a}nos Flesch\inst{2} \and Jeroen Kuipers\inst{3} \and \\Gijs Schoenmakers\inst{3} \and Koos Vrieze\inst{3}}

\institute{D\'epartement de Math\'ematique, Universit\'e de Mons, Place du Parc, 20, 7000 Mons, Belgium. \and Department of Quantitative Economics, Maastricht University, P.O. Box 616, 6200~MD, Maastricht, The Netherlands. \and Department of Knowledge Engineering, Maastricht University, P.O. Box 616, 6200~MD, Maastricht, The Netherlands.}

\maketitle

\begin{abstract}
Secure equilibrium is a refinement of Nash equilibrium, which provides some security to the players against deviations when a player changes his strategy to another best response strategy. The concept of secure equilibrium is specifically developed for assume-guarantee synthesis and has already been applied in this context. Yet, not much is known about its existence in games with more than two players. In this paper, we establish the existence of secure equilibrium in two classes of multi-player perfect information turn-based games: (1)~in games with possibly probabilistic transitions, having countable state and finite action spaces and bounded and continuous payoff functions, and (2)~in games with only deterministic transitions, having arbitrary state and action spaces and Borel payoff functions with a finite range (in particular, qualitative Borel payoff functions). We show that these results apply to several types of games studied in the literature.
\end{abstract}

\section{Introduction}

\noindent\textbf{The game:} We examine multi-player perfect information turn-based games with possibly probabilistic transitions. In such a game, each state is associated with a player, who controls this state. Play of the game starts at the initial state. At every state that play visits, the player who controls this state has to choose an action from a given action space. Next, play moves to a new state according to a probability measure, which may depend on the current state and the chosen action. This induces an infinite sequence of states and actions, and depending on this play, each player receives a payoff.

These payoffs are fairly general. For example, they could arise as some aggregation of instantaneous rewards that the players receive at the periods of the game. A frequently used aggregation would be taking the discounted sum of the instantaneous rewards. Another special case arises if the payoffs only depend on the first $T$ periods, which essentially results in a game of horizon $T$. Or as another example, the payoffs could represent reachability conditions, and then a player's payoff would be either 1 or 0 depending on whether a certain set of states is reached.

Two-player zero-sum games with possibly probabilistic transitions have been applied in the model-checking of reactive systems where randomness occurs, because they allow to model the interactions between a system and its environment. However, complex systems are usually made up of several components with objectives that are not necessarily antagonist, that is why multi-player non zero-sum games are better suited in such cases.
\smallskip

\noindent\textbf{Nash equilibrium:} In these games, Nash equilibrium is a prominent solution concept. A Nash equilibrium is a strategy profile such that no player can improve his payoff by individually deviating to another strategy. In various classes of perfect information games, a Nash equilibrium is known to exist (cf. for example \cite{FL83,Har85} or the result of Mertens and Neyman in \cite{Mer87}).\smallskip

\noindent\textbf{Secure equilibrium:} Despite the obvious appeal of Nash equilibrium, certain applications call for additional properties. Chatterjee et al \cite{CHJ06} introduced the concept of secure equilibrium, which they specifically designed for assume-guarantee synthesis. They gave a definition of secure equilibrium (\cite[Definition~8]{CHJ06}) in qualitative $n$-player games\footnote{Note that Chatterjee et al \cite{CHJ06} gave no existence result in the $n$-player case.}, and then a characterization (\cite[Proposition~4]{CHJ06}), which however turns out not to be equivalent. With their definition, such kind of equilibirum may fail to exist even in very simple games (cf. Remark~\ref{rem-sum-secure} or \cite[Example 2.2.34]{D13}). That is why we choose to call a \emph{strongly secure equilibrium} an equilibrium according to \cite[Definition~8]{CHJ06} (see Remark~\ref{rem-sum-secure}), and we choose to call, as it has already been done in \cite{BBDG12,BBDG13},  a \emph{secure equilibrium} an equilibrium according to the alternative characterization given in \cite[Proposition~4]{CHJ06}, extended to the quantitative framework. Note that, with these definitions, every secure equilibrium is automatically strongly secure if there are only two players.

Thus, a strategy profile is called a secure equilibrium if it is a Nash equilibrium and moreover the following security property holds: if any player individually deviates to another best response strategy, then it cannot be the case that all opponents are weakly worse off due to this deviation and at least one opponent is even strictly hurt. For applications of secure equilibrium, we refer to \cite{CHJ06,CH07,CR10}. 

Only little is known about the existence of secure equilibrium. To our knowledge, the available existence results are for games with only two players\footnote{However, Chatterjee and Henzinger \cite{CH07} proved the existence of secure equilibria in the special case of 3-player qualitative games where the third player can win unconditionally.} and for only deterministic transitions, i.e. every action in every state leads to a certain state with probability 1. Chatterjee et al \cite{CHJ06} proved the existence of a secure equilibrium in two-player games in which the payoff function of each player is the indicator function of a Borel subset of plays. Recently,  the existence of a secure equilibrium, even a subgame-perfect secure equilibrium, has been shown \cite{BBDG13} in two-player games in which each player's goal is to reach a certain set of states as quickly as possible and his payoff is determined by the number of moves it takes to get there. Very recently, the existence of a secure equilibrium has been proved \cite{BMR14} in a class of two-player quantitative games which includes payoff functions like $\sup$, $\inf$, $\limsup$, $\liminf$, mean-payoff, and discounted sum.\smallskip

\noindent\textbf{Our contribution:} In this paper, we address the existence problem of secure equilibrium for multi-player perfect information games. We establish the existence of secure equilibrium in two classes of such games. First, when probabilistic transitions are allowed\footnote{Such games are also called \emph{turn-based multi-player stochastic games}.}, we prove that a secure equilibrium exists, provided that the state space is countable, the action spaces are finite, and the payoff functions are bounded and continuous. To our knowledge, it is the first existence result of secure equilibria in \emph{multi-player quantitative} games. Moreover, our result extends to lexicographic preferences (see Section~\ref{sec:ccl}). Second, for games with only deterministic transitions, we prove that a secure equilibrium exists if the payoff functions are Borel measurable and have a finite range (in particular, for Borel qualitative objectives). For the latter result, we impose no restriction on the state and action spaces. We demonstrate that these results apply to several classes of games studied in the literature. Regarding proof techniques, the proof of the first result relies on an inductive procedure that eliminates certain actions of the game, while the proof of the second result exploits a transformation of the payoffs.\smallskip

\noindent\textbf{Structure of the paper:} Section~\ref{sec:model} is dedicated to the model. In Section~\ref{sec:results} we present the main results and mention some classes of games to which the results apply. Sections~\ref{sec:proof1} and~\ref{sec:proof2} contain the formal proofs of the results. Finally, Section~\ref{sec:ccl} concludes with some remarks and an algorithmic result for quantitative reachability objectives. 

\section{The Model}\label{sec:model}

We distinguish two types of perfect information games. The first type of games may contain transition probabilities between 0 and 1, whereas the second type only has deterministic transitions.

\subsection{Games with Probabilistic Transitions} A multi-player perfect information game with possibly probabilistic transitions is given by:
\begin{enumerate}
\item A finite set of players $N$, with $|N|\geq 2$. 
\item A countable state space $S$, containing an initial state $\widetilde{s}$.
\item A controlling player $i(s)\in N$ for every state $s\in S$.
\item A nonempty and finite action space $A(s)$ for every state $s\in S$. 
\item A probability measure $q(s,a)$ for every state $s\in S$ and action $a\in A(s)$, which assigns, to every $z\in S(s,a)$ (the set of possible successor states when choosing action $a$ in state $s$), the probability $q(s,a)(z)$ of transition from state $s$ to state $z$ under action $a$.

Let $\mathbb{N}=\{0,1,2,\ldots\}$. Let $\mathcal{H}$ be the set of all sequences of the form $(s^0,a^0,\ldots,$\linebreak$s^{n-1},a^{n-1},s^n)$, where $n\in\mathbb{N}$, such that $s^0=\widetilde{s}$, and $a^m\in A(s^m)$ and $s^{m+1}\in S(s^m,a^m)$ for all $m=0,1,\ldots,n-1$. Let $\mathcal{P}$ be the set of all infinite sequences of the form $(s^m,a^m)_{m\in \mathbb{N}}$ such that $s^0=\widetilde{s}$, and $a^m\in A(s^m)$ and $s^{m+1}\in S(s^m,a^m)$ for all $m\in\mathbb{N}$. The elements of $\mathcal{H}$ are called histories and the elements of $\mathcal{P}$ are called plays. A history $h$ is called a prefix of a play $p$, denoted by $h\prec p$, if $p$ starts with $h$. We endow $\mathcal{P}$ with the topology induced by the cylinder sets $\mathcal{C}(h) = \{p\in\mathcal{P}|h\prec p\}$ for $h\in \mathcal{H}$. In this topology, a sequence of plays $(p_m)_{m\in\mathbb{N}}$ converges to a play $p$ precisely when for every $k\in\mathbb{N}$ there exists an $N_k\in\mathbb{N}$ such that $p_m$ coincides with $p$ on the first $k$ coordinates for every $m\geq N_k$.

\item A payoff function $u_i:\mathcal{P}\rightarrow \mathbb{R}$ for every player $i\in N$, which is bounded and Borel measurable.
\end{enumerate}

The game is played as follows at periods in $\mathbb{N}=\{0,1,2,\ldots\}$. Play starts at period 0 in state $s^0=\widetilde{s}$, where the controlling player $i(s^0)$ chooses an action $a^0$ from $A(s^0)$. Then, transition occurs according to the probability measure $q(s^0,a^0)$ to a state $s^1$. At period 1, the controlling player $i(s^1)$ chooses an action $a^1$ from $A(s^1)$. Then, transition occurs according to the probability measure $q(s^1,a^1)$ to a state $s^2$, and so on. The realization of this process is a play $p=(s^0,a^0,s^1,a^1,s^2,\ldots)$, and each player $i\in N$ receives payoff $u_i(p)$.

We can assume without loss of generality that the sets $S(s,a)$, for $s\in S$ and $a\in A(s)$, are mutually disjoint, and their union is $S$. This means that each state can be visited in exactly one way from the initial state $\widetilde{s}$, so there is a bijection between states and histories. For this reason, we will work with states instead of histories. \smallskip

\noindent\textbf{Strategies:} 
For every player $i\in N$, let $S_i$ denote the set of those states (histories) which are controlled by him. A strategy for a player $i\in N$ is a function $\sigma_i$ that assigns an action $\sigma_i(s)\in A(s)$ to every state $s\in S_i$. The interpretation is that $\sigma_i(s)$ is the recommended action if state $s$ is reached. A strategy profile is a tuple $(\sigma_1,\ldots,\sigma_{|N|})$ where $\sigma_i$ is a strategy for every player $i\in N$. Given a strategy profile $\sigma=(\sigma_1,\ldots,\sigma_{|N|})$ and a player $i$, we denote by $\sigma_{-i}$ the profile of strategies of player $i$'s opponents, i.e. $\sigma_{-i}=(\sigma_j)_{j\in N,\, j\neq i}$. 

A strategy profile $\sigma$ induces a unique probability measure on the sigma-algebra of the Borel sets of $\mathcal{P}$. The corresponding expected payoff for player $i\in N$ is denoted by $u_i(\sigma)$.\smallskip

\noindent\textbf{Nash equilibrium:} A strategy profile $\sigma^\ast$ is called a Nash equilibrium, if no player can improve his expected payoff by a unilateral deviation, i.e. $u_i(\tau_i,\sigma^\ast_{-i})\leq u_i(\sigma^\ast)$ for every player $i\in N$ and every strategy $\tau_i$ for player $i$. In other words, every player plays a best response to the strategies of his opponents. \smallskip

\noindent\textbf{Secure equilibrium:}
A strategy profile $\sigma^\ast$ is called a secure equilibrium, if it is a Nash equilibrium and if, additionally, no player $i\in N$ has a strategy $\tau_i$ with $u_i(\tau_i,\sigma^\ast_{-i})=u_i(\sigma^\ast)$ such that we have for all players $j\in N\setminus\{i\}$ that $u_j(\tau_i,\sigma^\ast_{-i}) \leq u_j(\sigma^\ast)$ and for some player $k\in N\setminus\{i\}$ that $u_k(\tau_i,\sigma^\ast_{-i})<u_k(\sigma^\ast)$.

The interpretation of the additional property is the following. Consider a Nash equilibrium $\sigma^\ast$ and a strategy $\tau_i$ for some player $i$. By deviating to $\tau_i$, player $i$ either receives a worse expected payoff than with his original strategy $\sigma_i^\ast$, or the same expected payoff at best. In the former case, it is not in player $i$'s interest to deviate to $\tau_i$. In the latter case, however, even though player $i$ is indifferent, his opponents could get hurt. The property thus prevents the case that this deviation is weakly worse for all opponents of player $i$, and that this deviation is even hurting a player. So, in a certain sense, player $i$´s opponents are secure against such deviations by player $i$.

An equivalent formulation of secure equilibrium is the following: a strategy profile $\sigma^\ast$ is called a secure equilibrium, if it is a Nash equilibrium and if, additionally, the following property holds for every player $i\in N$: if $\tau_i$ is a strategy for player $i$ such that $u_i(\tau_i,\sigma^\ast_{-i})=u_i(\sigma^\ast)$ and $u_j(\tau_i,\sigma^\ast_{-i}) < u_j(\sigma^\ast)$ for some player $j\in N$, then there is a player $k\in N$ such that $u_k(\tau_i,\sigma^\ast_{-i})>u_k(\sigma^\ast)$.

\begin{remark}\label{rem-sum-secure} There is a refinement of secure equilibrium, which plays an important role in the proofs of our main results, Theorems~\ref{main-th1} and~\ref{main-th2}. We call a strategy profile $\sigma^\ast$ a \emph{sum-secure equilibrium}, if it is a Nash equilibrium and if additionally the following property holds for every player $i\in N$: if $\tau_i$ is a strategy for player $i$ such that $u_i(\tau_i,\sigma^\ast_{-i})=u_i(\sigma^\ast)$ then $\sum_{j\in N\setminus\{i\}}u_j(\tau_i,\sigma^\ast_{-i})\geq \sum_{j\in N\setminus\{i\}}u_j(\sigma^\ast)$. In fact, in Theorems~\ref{main-th1} and~\ref{main-th2}, we prove the existence of sum-secure equilibria.

An even stronger concept is the following. A strategy profile $\sigma^\ast$ is called a \emph{strongly secure equilibrium}, if it is a Nash equilibrium and if additionally the following property holds for every player $i\in N$: if $\tau_i$ is a strategy for player $i$ such that $u_i(\tau_i,\sigma^\ast_{-i})=u_i(\sigma^\ast)$ then $u_j(\tau_i,\sigma^\ast_{-i})\geq u_j(\sigma^\ast)$ for all players $j\neq i$. Note that every secure equilibrium is also strongly secure if there are only two players. The concept of strongly secure equilibrium has the serious drawback though that it fails to exist even in very simple games \cite[Example 2.2.34]{D13}. Indeed, consider the following game with three players. Player 1 has two actions, such that the first action yields payoffs $(1,2,0)$, whereas the second action yields payoffs $(1,0,2)$. Players 2 and 3 never become active in this game. Now either action of player 1 yields a secure equilibrium, but neither of them is strongly secure.
\end{remark}

\subsection{Games with Deterministic Transitions} Another type of perfect information games arises when the game has only deterministic transitions, i.e. when the set $S(s,a)$ is a singleton for every state $s\in S$ and every action $a\in A(s)$. In this case, we do not need to take care of measurability conditions for the calculation of expected payoffs. Hence, we can drop the assumptions that the state space is countable and the action spaces are finite, and they can be arbitrary. 

\section{The Main Results}\label{sec:results}

In this section, we present and discuss our main results for the existence of secure equilibrium. First we examine the case of probabilistic transitions.

\begin{theorem}\label{main-th1} Take a perfect information game, possibly having probabilistic transitions, with countable state and finite action spaces. If every player's payoff function is bounded and continuous\footnote{Notice that any continuous function is Borel measurable.}, then the game admits a secure equilibrium.
\end{theorem}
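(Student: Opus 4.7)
The plan is to prove the stronger statement that $G$ admits a sum-secure equilibrium in the sense of Remark~\ref{rem-sum-secure}; since every sum-secure equilibrium is secure, this implies the theorem. The construction combines compactness of the strategy space, continuity of expected payoffs, and an iterative elimination of actions that enforces, at every controlled state, the lexicographic preference ``own payoff first, opponents' sum second.''

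I would first topologize. Endow $\Sigma=\prod_{s\in S}A(s)$ with the product topology; since every $A(s)$ is finite and $S$ is countable, $\Sigma$ is compact and metrizable by Tychonoff. Boundedness and continuity of each $u_i$ on $\mathcal{P}$, combined with dominated convergence applied to the countably branching game tree, yield continuity of $\sigma\mapsto u_i(\sigma)$ on $\Sigma$. Next I would perform the elimination. Set $A^0(s)=A(s)$ for every $s$ and $\Sigma^0=\Sigma$. Given $A^n$ and $\Sigma^n=\prod_s A^n(s)$, declare $a\in A^n(s)$ to belong to $A^{n+1}(s)$ iff there exists $\sigma\in\Sigma^n$ with $\sigma_{i(s)}(s)=a$ such that (i)~$\sigma_{i(s)}$ maximizes player $i(s)$'s expected payoff among unilateral deviations whose actions stay in $A^n$, and (ii)~among actions at $s$ that achieve this maximum, $a$ attains the smallest expected sum of opponents' payoffs. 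Since each $A^n(s)$ is a nonempty subset of the finite set $A(s)$ and the sequence is non-increasing, $A^\infty(s):=\bigcap_n A^n(s)$ is nonempty; $\Sigma^\infty=\prod_s A^\infty(s)$ is therefore a nonempty compact subset of $\Sigma$. Pick any $\sigma^\ast\in\Sigma^\infty$.

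It remains to verify that $\sigma^\ast$ is a sum-secure equilibrium of the full game $G$. The Nash property would come from passing (i) to the limit and the sum-secure refinement from passing (ii); but the stage-$n$ inequalities only compare $\sigma^\ast_i$ to deviations whose actions lie in $A^n$, while the Nash and sum-secure conditions in $G$ concern arbitrary strategies of player $i$. The main obstacle is therefore this bridging step. I would handle it by exploiting continuity of $u_i$ in the product topology on $\mathcal{P}$: since payoffs depend essentially on finite prefixes of plays, any deviation $\tau_i$ of player $i$ in $G$ can be approximated by strategies $\tau_i^n$ that agree with $\tau_i$ on states of depth at most $n$ while prescribing actions in $A^n(s)$ at deeper states. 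Continuity of $u_i$ and of $\sum_{j\neq i}u_j$ would then transfer the stagewise inequalities to $\tau_i$ in the limit, yielding the Nash property and the sum-secure refinement with respect to arbitrary deviations in $G$. Showing rigorously that the required ``depth-$n$'' approximants are indeed admissible at stage $n$ — so that continuity can be used — is, I expect, the delicate heart of the argument.
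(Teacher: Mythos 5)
There is a genuine gap, and it sits exactly where your construction stops: after the elimination you ``pick any $\sigma^\ast\in\Sigma^\infty$'' and hope it is a (sum-secure) Nash equilibrium. No elimination procedure of this kind can deliver that, because survival of each individual action does not make a fixed profile immune to deviations. Concretely: let player 1 choose $L$ or $R$; after $L$ player 2 chooses between payoffs $(5,0)$ and $(0,0)$; after $R$ the payoff is $(1,0)$. Player 2 is indifferent, so both of his actions survive every stage of your procedure (your tie-breaking (ii) is vacuous since the opponent's payoff is always $0$), and both $L$ and $R$ survive because each is a best response against \emph{some} surviving profile. Yet the surviving profile in which player 1 plays $R$ and player 2 plays the $(5,0)$ action is not a Nash equilibrium. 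The missing idea is a \emph{punishment} component: the paper's equilibrium is not a profile of the restricted game $G^\infty$ alone, but a designated on-path profile $\rho^\ast$ augmented with trigger strategies --- once player $i$ deviates, his opponents switch to strategies that are optimal for the coalition $-i$ in the zero-sum game $G_i$ (within $G^\infty$ as long as $i$ stays there, in the original game $G$ once he leaves). Relatedly, the paper's elimination removes actions that fail to preserve the zero-sum \emph{value} $v_{i(s)}(s)$ (what the controller can guarantee against adversarial opponents), not actions that are never best responses against surviving profiles; the monotonicity $v_i^k(s)\le v_i^{k+1}(s)$ of these values is precisely what makes leaving $G^\infty$ strictly unprofitable when punished.

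Your ``bridging step'' also fails as described: the approximants $\tau_i^n$ agree with an arbitrary deviation $\tau_i$ at depth $\le n$, so they may use eliminated actions at shallow states and are therefore not admissible at stage $n$; the stage-$n$ inequalities simply do not apply to them. Finally, your local tie-breaking (ii) --- minimizing the opponents' sum node by node, against profiles that change from stage to stage --- does not yield the global sum-security condition, which quantifies over all payoff-preserving deviations simultaneously. The paper instead proves (as a consequence of the elimination) that every payoff-preserving deviation keeps play inside $G^\infty$, and then chooses $\rho^\ast$ as a \emph{global} minimizer of $\sum_{i\in N}u_i$ over all profiles of $G^\infty$; since the deviator's own payoff is unchanged, minimality of the total sum transfers to the opponents' sum. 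I would recommend restructuring your argument around these three elements: value-based elimination, trigger punishments, and a global (not nodewise) selection of the on-path profile.
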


Let us first elaborate on the conditions of Theorem~\ref{main-th1}. By assuming that the state space is countable, we avoid measure theoretic complications. Without the assumptions that the action spaces are finite and the payoff functions are continuous, even a Nash equilibrium may fail to exist. This is shown by the following two examples.

\begin{example}\label{choosing-integer} Consider the following one shot game (each finite duration game can be seen as an infinite duration game): a player can choose any positive integer, and if he chooses $k$ then his payoff is $1-\tfrac{1}{k}$. The payoff function is continuous, but the action space is infinite. Clearly, the game admits no Nash equilibrium. 
\end{example}

\begin{example}
Suppose that a player can decide every day whether to stop or to continue. If he stops at period $n$, then his payoff is $1-\tfrac{1}{n}$, whereas if he never stops then he receives payoff 0. Here, the action space is finite, but the payoff function is not continuous. This game has no Nash equilibrium.
\end{example}

For games with only deterministic transitions, we also obtain the following result, which imposes a condition on the range of the payoff functions.

\begin{theorem}\label{main-th2} 
Take a perfect information game with deterministic transitions, with arbitrary state and action spaces. If every player's payoff function is Borel measurable and has a finite range, then the game admits a secure equilibrium.
\end{theorem}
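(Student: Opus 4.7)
The plan is to deduce Theorem~\ref{main-th2} from the stronger claim that a sum-secure equilibrium exists (see Remark~\ref{rem-sum-secure}), which in turn I would reduce to the existence of an ordinary Nash equilibrium via a payoff transformation. Specifically, for each player $i\in N$, introduce auxiliary payoffs
$$
\widehat{u}_i \;:=\; u_i \;-\; \epsilon \sum_{j\in N\setminus\{i\}} u_j,
$$
where $\epsilon>0$ is a constant to be chosen. Each $\widehat{u}_i$ is a finite linear combination of Borel payoffs with finite range, so it remains Borel measurable and finite-ranged.

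To calibrate $\epsilon$, let $M$ be a uniform bound on $|u_i|$ (which exists because each $u_i$ has finite range) and let $\delta>0$ be the smallest positive gap between any two values that occur in the (finite) range of any $u_i$, with $\delta:=+\infty$ if every $u_i$ is constant. Fix any $\epsilon\in\bigl(0,\;\delta/(2M(|N|-1))\bigr)$. The claim is that every Nash equilibrium $\sigma^\ast$ of the game with payoffs $(\widehat{u}_i)_{i\in N}$ is automatically a sum-secure equilibrium of the game with payoffs $(u_i)_{i\in N}$. Indeed, for any player $i$ and strategy $\tau_i$, rearranging the Nash inequality $\widehat{u}_i(\tau_i,\sigma^\ast_{-i})\leq\widehat{u}_i(\sigma^\ast)$ gives
$$
u_i(\tau_i,\sigma^\ast_{-i}) - u_i(\sigma^\ast) \;\leq\; \epsilon \sum_{j\neq i}\bigl[u_j(\tau_i,\sigma^\ast_{-i}) - u_j(\sigma^\ast)\bigr],
$$
whose right-hand side has absolute value strictly less than $\delta$. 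Since the left-hand side, if positive, is at least $\delta$, it must be non-positive, which is the Nash property for $(u_i)$. Specializing to the case $u_i(\tau_i,\sigma^\ast_{-i})=u_i(\sigma^\ast)$, the displayed inequality collapses to $\sum_{j\neq i}u_j(\tau_i,\sigma^\ast_{-i})\geq \sum_{j\neq i}u_j(\sigma^\ast)$, which is exactly the sum-secure condition.

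The main obstacle is therefore to establish the existence of a Nash equilibrium in the transformed game, a multi-player perfect-information game with deterministic transitions, arbitrary state and action spaces, and Borel measurable payoffs of finite range. I would attack this through Martin's Borel determinacy theorem: for each player $i$ and each value $v$ in the (finite) range of $\widehat{u}_i$, the two-player zero-sum game pitting $i$ against the coalition of all other players, with qualitative objective $\{\widehat{u}_i\geq v\}$, is Borel determined. This yields a well-defined individual minmax value $\widehat{v}_i$ for each player together with a ``punishment'' strategy for the coalition. Exploiting the finiteness of the set of possible payoff vectors, a lexicographic selection over the vectors $(\widehat{v}_1,\ldots,\widehat{v}_{|N|})$-dominating profiles identifies a ``main play'' along which every player receives at least his individual minmax; the equilibrium strategy profile prescribes following that main play and, as soon as some player $i$ deviates, switching to the coalition's punishment strategy against $i$. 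The delicate part is performing this selection and verifying the construction in the full generality of arbitrary state and action spaces, since one cannot rely on compactness arguments to extract optimal strategies; the finite range of the payoffs is precisely what keeps the combinatorics tractable.
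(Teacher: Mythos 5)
Your payoff transformation and the deduction of security from it are exactly the paper's argument: the paper defines $u^\delta_i=u_i-\delta\cdot\sum_{j\neq i}u_j$ with $\delta=d/(2|N|R)$ (where $d$ is the minimal gap and $R$ the bound on the range, so the calibration is the same as your $\epsilon<\delta/(2M(|N|-1))$ up to the constant), and derives both the Nash property and the sum-secure property in $G$ from the Nash inequality in the transformed game in precisely the way you do. So the heart of your proposal coincides with the paper's proof.

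The one step where you depart from the paper, and where your sketch has a genuine gap, is the existence of a Nash equilibrium in the transformed game. You propose to select (lexicographically, over a set of plays that with arbitrary state and action spaces need not carry any usable structure) a ``main play'' that dominates the root minmax vector $(\widehat{v}_1,\ldots,\widehat{v}_{|N|})$, and to support it with coalition punishments. Dominating the root values is not sufficient for the threat construction to close: if player $i$ deviates at an interior state $s$ of the main play, the punishment only holds him down to the value of the zero-sum game in the subgame at the successor state, which is at most the value $\widehat{v}_i(s)$ at $s$ but can strictly exceed $\widehat{v}_i(\widetilde{s})$; so the main play must give each player at least his value at \emph{every} state it visits, not merely at the initial state. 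The paper's (likewise sketched) construction obtains this for free, and avoids any selection problem, by invoking the corollary of Martin's theorem that gives each player $i$ a \emph{subgame-perfect} optimal strategy $\sigma_i$ and his opponents a subgame-perfect optimal punishment profile $\sigma_{-i}$ in the zero-sum game with payoff $\widehat{u}_i$: the main play is the one generated by all players following their own $\sigma_i$, whose continuation payoff for player $i$ at any visited state $s$ is automatically at least his value at $s$. With that substitution your argument becomes the paper's.
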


Theorem~\ref{main-th2} assumes that the range of the payoff functions is finite. This assumption is only useful for deterministic transitions, otherwise the range of the expected payoffs may become infinite. Without this assumption, even a Nash equilibrium can fail to exist, as Example~\ref{choosing-integer} above shows. 

The methods of proving Theorems~\ref{main-th1} and~\ref{main-th2} are really different. The proof of Theorem~\ref{main-th1} uses an inductive procedure that eliminates certain actions in certain states. This procedure terminates with a game, in which one can identify an interesting strategy profile that can be enhanced with punishment strategies to be a secure equilibrium in the original game. The proof of Theorem~\ref{main-th2} relies on a transformation of the payoffs. In the new game, a Nash equilibrium exists, and it is a secure equilibrium of the original game.

The above two theorems apply to various classes of games that have been studied in the literature. We now mention a number of them. We only discuss the assumptions imposed on the payoff functions, as it is clear when a game satisfies the rest of the assumptions:
\begin{enumerate}
\item In \emph{games with a finite horizon}, the payoff functions are continuous, so Theorem~\ref{main-th1} is applicable to such games if the other hypotheses are satisfied. The same observation holds for \emph{discounted games}, where the players aggregate instantaneous payoffs by taking the discounted sum.\smallskip

\item All qualitative payoff functions have a finite range, so Theorem~\ref{main-th2} directly implies the existence of a secure equilibrium in multi-player games played on a graph with deterministic transitions and \emph{qualitative Borel objectives} (in particular, objectives like reachability, safety, (co--)B\"{u}chi, parity,\ldots). \smallskip

\item Now consider a game played on a graph with quantitative payoff functions.  For \emph{quantitative reachability objectives}, by \cite[Remark~2.5]{BBDG13}, we can use Theorem~\ref{main-th1} to find a secure equilibrium as long as the transitions are deterministic (the transformation drastically changes expected payoffs). For \emph{quantitative safety objectives}, continuous payoff functions can also be defined: if $F_i\subseteq S$ is the safety set of player~$i$, let $u_i(p)=1-\tfrac{1}{n+1}$  for any play $p$ if, along $p$, the set $S\setminus F_i$ is reached at period $n$ for the first time and let $u_i(p)=1$ if $S\setminus F_i$ is never reached along $p$. And in a similar way for \emph{quantitative B\"{u}chi objectives}: if $B_i \subseteq S$ is the B\"uchi set of player~$i$, let $M(p)$ denote, for any play $p$, the set of periods at which the play $p$ is in a state in $B_i$. Then, define $u_i(p)=\sum_{k\in M(p)} \tfrac{1}{2^k}$.\smallskip

\item Theorem~\ref{main-th2} implies the existence of a secure equilibrium in games played on \emph{finite weighted\footnote{Each edge of the graph is labelled by a $|N|$-tuple of real values.} graphs} with deterministic transitions, where the payoff functions are computed as the $\sup$, $\inf$, $\limsup$ or $\liminf$ of the weights appearing along plays (as in \cite{BMR14}). Indeed, these functions have finite range in such games.
\end{enumerate}

\section{The proof of Theorem \ref{main-th1}}\label{sec:proof1}

In this section, we provide a formal proof of Theorem~\ref{main-th1}. Consider a perfect information game $G$, having possibly probabilistic transitions, with a countable state and finite action spaces. Assume that the payoff function of every player is bounded and continuous. 

\subsection{Preliminaries}
In this subsection, we keep the game $G$ fixed. We introduce some notation, define some preliminary notions and state some properties of them.

For every player $i\in N$, we use the notation $\Sigma_i$ for the set of strategies of player $i$, so $\Sigma_i = \times_{s\in S_i} A(s)$. We endow $\Sigma_i$ with the product topology $\mathcal{T}_i$. Since the set $S_i$ is countable and the action spaces are finite, the topological space $(\Sigma_i,\mathcal{T}_i)$ is compact and metrizable. Hence, the set of strategy profiles $\Sigma=\times_{i\in N}\Sigma_i$, endowed with the product topology $\mathcal{T}=\times_{i\in N}\mathcal{T}_i$, is also compact and metrizable. So, these spaces are sequentially compact, meaning that every sequence in them has a convergent subsequence. This is one of the main consequences of assuming that the action spaces are finite. 

We assumed that the payoffs are continuous on the set of plays. The next lemma states that the expected payoffs are continuous as well. 

\begin{lemma}\label{cont-exp}
For any player $i\in N$, the expected payoff function $u_i:\Sigma\rightarrow\mathbb{R}$ is continuous.
\end{lemma}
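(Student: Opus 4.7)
The plan is to prove sequential continuity of $u_i:\Sigma\to\mathbb{R}$; since $\Sigma$ is metrizable, this suffices. So I would take a sequence $\sigma^{(m)}\to\sigma$ in $\Sigma$ and aim for $u_i(\sigma^{(m)})\to u_i(\sigma)$. The strategy is to realize all the induced measures $\mathbb{P}_{\sigma^{(m)}}$ and $\mathbb{P}_\sigma$ simultaneously on one auxiliary probability space $(\Omega,\mathcal{F},P)$ by the following coupling: fix mutually independent random variables $\{Z_{s,a}:s\in S,\,a\in A(s)\}$ with $Z_{s,a}\sim q(s,a)$, and for each strategy profile $\tau$ define the random play $p(\tau)=(s^0,a^0,s^1,a^1,\ldots)$ recursively by $s^0=\widetilde{s}$, $a^k=\tau_{i(s^k)}(s^k)$, and $s^{k+1}=Z_{s^k,a^k}$. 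A standard verification shows that $p(\tau)$ has law $\mathbb{P}_\tau$, hence $u_i(\tau)=E[u_i(p(\tau))]$.

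The crux is then to show that $p(\sigma^{(m)})\to p(\sigma)$ pointwise on $\Omega$. I would prove this by induction on the period $k$: fixing $\omega\in\Omega$ and writing $s^j$ for $s^j(\sigma)(\omega)$, assume that for some threshold $M_k$ and every $m\geq M_k$ one has $s^j(\sigma^{(m)})(\omega)=s^j$ for all $j\leq k$. Then the action chosen at period $k$ under $\sigma^{(m)}$ equals $\sigma^{(m)}_{i(s^k)}(s^k)$, and since convergence in the product topology on $\Sigma$ is exactly state-by-state eventual agreement, this equals $\sigma_{i(s^k)}(s^k)$ for all $m$ past some larger threshold. Both plays therefore invoke the same $Z_{s^k,a^k}(\omega)$, so their $(k{+}1)$-th states coincide as well. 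As convergence in $\mathcal{P}$ is exactly eventual agreement on every finite prefix, this gives the claim.

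To conclude, continuity of $u_i$ on $\mathcal{P}$ yields $u_i(p(\sigma^{(m)})(\omega))\to u_i(p(\sigma)(\omega))$ for every $\omega$, and boundedness of $u_i$ together with the dominated convergence theorem gives $u_i(\sigma^{(m)})=E[u_i(p(\sigma^{(m)}))]\to E[u_i(p(\sigma))]=u_i(\sigma)$. The delicate point I expect is the inductive step, which crucially exploits the fact that convergence in $\Sigma$ is pointwise (state-by-state) eventual agreement---a feature we get only because $S$ is countable and each $A(s)$ is finite, so that the product topology $\mathcal{T}$ on $\Sigma$ has this simple characterization. Everything else in the argument is routine once the coupling is in place.
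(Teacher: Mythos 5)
Your proof is correct, but it takes a genuinely different route from the paper's. The paper shows that the induced measures $\mathbb{P}_{\sigma^{(m)}}$ converge weakly to $\mathbb{P}_\sigma$: it first gets convergence of $\mathbb{P}_{\sigma^{(m)}}(\mathcal{C}(s))$ for every cylinder (using the bijection between states and histories), then writes an arbitrary open set as a countable disjoint union of cylinders to obtain $\liminf_m \mathbb{P}_{\sigma^{(m)}}(O)\geq \mathbb{P}_\sigma(O)$, and finally invokes the Portmanteau characterization of weak convergence to conclude for bounded continuous $u_i$. You instead build an explicit coupling of all the laws $\mathbb{P}_\tau$ on a single probability space via independent variables $Z_{s,a}\sim q(s,a)$, prove \emph{pointwise} convergence of the random plays $p(\sigma^{(m)})\to p(\sigma)$ by induction on the period, and finish with dominated convergence. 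Both arguments lean on the same two structural facts --- the tree structure of the state space (each state is reached by a unique history, so the $Z_{s,a}$ appearing along a play are distinct and your coupling indeed realizes $\mathbb{P}_\tau$; in the paper this same fact gives convergence of cylinder probabilities) and the characterization of convergence in $\Sigma$ as state-by-state eventual agreement. What your approach buys is self-containedness: you avoid citing weak-convergence machinery and actually prove the stronger statement that the plays converge surely under the coupling. What the paper's approach buys is brevity and no need to verify that the coupling has the right law (a routine but nonempty check, which you correctly flag as a ``standard verification'' and which does require observing that the states along a history are pairwise distinct). No gaps; the one point you should make explicit in a final write-up is that verification of the law of $p(\tau)$, since it is exactly where the tree structure enters.
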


\begin{proof}
Since $\Sigma$, endowed with the topology $\mathcal{T}$, is metrizable, it suffices to show that if a sequence $(\sigma^n)_{n\in\mathbb{N}}$ of strategy profiles converges to a strategy profile $\sigma$, then $u_i(\sigma^n)$ converges to $u_i(\sigma)$. So, take a sequence $(\sigma^n)_{n\in\mathbb{N}}$ of strategy profiles converging to some $\sigma$. Let $\mathbb{P}_{\sigma^n}$ denote the probability measure induced by $\sigma^n$ on the set of plays $\mathcal{P}$, and let $\mathbb{P}_{\sigma}$ be the one induced by $\sigma$. 

Consider an arbitrary cylinder set $\mathcal{C}(s)$ of $\mathcal{P}$. Because every state can be reached in exactly one way from the initial state $\widetilde{s}$, it follows that $\mathbb{P}_{\sigma^n}(\mathcal{C}(s))$ converges to $\mathbb{P}_{\sigma}(\mathcal{C}(s))$. Now take an open set $O$ of $\mathcal{P}$. Notice that $O$ can be written as a disjoint union of cylinder sets of $\mathcal{P}$. Since there are only countably many cylinder sets, we find that $\liminf_{n\rightarrow\infty} \mathbb{P}_{\sigma^n}(O) \geq \mathbb{P}_{\sigma}(O)$. This means that the probability measures $\mathbb{P}_{\sigma^n}$ converge weakly to $\mathbb{P}_{\sigma}$. Hence, because $u_i:\mathcal{P}\rightarrow\mathbb{R}$ is assumed to be bounded and continuous, $u_i(\sigma^n)=\mathbb{E}_{\sigma^n}(u_i)$ converges to $u_i(\sigma)=\mathbb{E}_{\sigma}(u_i)$, where $\mathbb{E}$ refers to the expectation.\qed
\end{proof}

Given a state $s\in S$, we define the subgame $G(s)$ as the game that arises when state $s$ is reached (i.e. past play has followed the unique history from the initial state $\widetilde{s}$ to $s$). Every strategy profile $\sigma$ induces a strategy profile in $G(s)$, as well as an expected payoff for every player $i$, which we denote by $u_{i}(\sigma|s)$. 

For every player $i\in N$, we derive a zero-sum perfect information game $G_i$ from $G$ by making the following modification. There are two players: player $i$ and an imaginary player $-i$, who replaces the set of opponents $N\setminus\{i\}$ of player $i$. So, whenever a state is reached controlled by a player from $N\setminus\{i\}$, player $-i$ can choose the action. Player $i$ tries to maximize his expected payoff given by $u_i$, whereas player $-i$ tries to minimize this expected payoff, so $u_{-i}=-u_i$. We say that this zero-sum game $G_i$ has a value, denoted by $v_i$, if player $i$ has a strategy $\sigma^\ast_i$ and his opponents have a strategy profile $\sigma^\ast_{-i}$, which is thus a strategy for player $-i$, such that $u_i(\sigma^\ast_i,\tau_{-i})\geq v_i$ for every strategy profile $\tau_{-i}$ and $u_i(\tau_i,\sigma^\ast_{-i})\leq v_i$ for every strategy $\tau_i$ for player $i$. This means that $\sigma_i^\ast$ guarantees for player $i$ that he receives an expected payoff of at least $v_i$, and $\sigma_{-i}^\ast$ guarantees that player $i$ does not receive an expected payoff of more than $v_i$. We call the strategy $\sigma^\ast_i$ optimal for player $i$ and the strategy profile $\sigma^\ast_{-i}$ optimal for player $i$'s opponents. Note that $(\sigma^*_i,\sigma^*_{-i})$ forms a Nash equilibrium in $G_i$, and vice versa, if $(\tau_i,\tau_{-i})$ is a Nash equilibrium in $G_i$, then $\tau_i$ and $\tau_{-i}$ are optimal. In a similar way, we can also speak of the value of the subgame $G_i(s)$ of $G_i$, for a state $s\in S$, which we denote by $v_i(s)$.

The next lemma states that, for every player $i\in N$ and state $s\in S$, the game $G_i(s)$ admits a value. Moreover, one can find a strategy for player $i$ that induces an optimal strategy in every game $G_i(s)$, for $s\in S$, and player $i$'s opponents have a similar strategy profile. This follows from the existence of a subgame-perfect Nash equilibrium in our setting (where subgame-perfect Nash equilibrium refers to a strategy profile that induces a Nash equilibrium in every subgame), see \cite[Theorem~1.2]{MS07}. It is also an easy extension of \cite[Corollary~4.2]{FL83}, with almost the same proof (based on approximations of the game by finite horizon truncations).

\begin{lemma}\label{val-exists}
Take a player $i\in N$. The value $v_i(s)$ of the game $G_i(s)$ exists for every state $s\in S$. Moreover, player $i$ has a strategy $\sigma^*_i$ such that $\sigma^*_i$ induces an optimal strategy in the subgame $G_i(s)$, for every state $s\in S$. Similarly, the opponents of player $i$ have a strategy profile $\sigma^*_{-i}$ such that $\sigma^*_{-i}$ induces an optimal strategy profile in the subgame $G_i(s)$, for every state $s\in S$.
\end{lemma}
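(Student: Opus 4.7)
The plan is to approximate $G_i$ by its finite-horizon truncations, use backward induction to obtain the value in each truncation, and then exploit compactness of the strategy spaces (together with the continuity of expected payoffs established in Lemma~\ref{cont-exp}) to pass to the limit. Throughout I treat $G_i$ as a two-player zero-sum perfect information stochastic game with countable state space, finite action spaces, and bounded continuous payoff.

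First I would fix, for each $T \in \mathbb{N}$, the truncated game $G_i^T$ that is played as $G_i$ for the first $T$ stages and then assigns a cylinder-depth-$T$ terminal payoff $u_i^T$. Since $u_i$ is bounded and continuous in the cylinder topology of $\mathcal{P}$, the approximants $u_i^T$ can be chosen so that $\mathbb{E}_\sigma(u_i^T) \to \mathbb{E}_\sigma(u_i)$ uniformly over strategy profiles $\sigma$. The game $G_i^T(s)$ has a finite-depth tree structure, so standard backward induction produces a value $v_i^T(s)$ at every state $s \in S$, attained by pure strategies $\sigma_i^T \in \Sigma_i$ and $\sigma_{-i}^T \in \Sigma_{-i}$ that are simultaneously optimal in every subgame of $G_i^T$.

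Next I would extract, using sequential compactness of $\Sigma_i$ and $\Sigma_{-i}$ (which are products of finite sets indexed by the countable set $S$), a subsequence $T_k \to \infty$ along which $\sigma_i^{T_k} \to \sigma_i^*$ and $\sigma_{-i}^{T_k} \to \sigma_{-i}^*$; the limits of $v_i^{T_k}(s)$ define a candidate value $v_i(s)$ at every state. Then, for any fixed state $s \in S$ and any opponent strategy profile $\tau_{-i}$ in the subgame $G_i(s)$, Lemma~\ref{cont-exp} applied in that subgame yields $u_i(\sigma_i^*, \tau_{-i} | s) \geq v_i(s)$ in the limit, and symmetrically $u_i(\tau_i, \sigma_{-i}^* | s) \leq v_i(s)$ for any $\tau_i$. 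This simultaneously establishes the existence of the value $v_i(s)$ and the subgame-perfect optimality of $\sigma_i^*$ and $\sigma_{-i}^*$.

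The hard part is that each $\sigma_i^{T_k}$ is guaranteed subgame-perfect only within $G_i^{T_k}$, so one must preserve optimality in every one of the countably many subgames of $G_i$ when passing to the limit. Because $S$ is countable and the expected-payoff map in each subgame is continuous by Lemma~\ref{cont-exp}, a diagonal extraction over an enumeration of $S$ yields a single subsequence along which all the required inequalities survive. This is also precisely where the cited black boxes \cite[Theorem~1.2]{MS07} and \cite[Corollary~4.2]{FL83} do the work, if one prefers to invoke them directly rather than redo the finite-horizon approximation by hand.
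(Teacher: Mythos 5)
The paper does not actually prove this lemma: it defers entirely to \cite[Theorem~1.2]{MS07} and remarks that the statement is ``an easy extension of \cite[Corollary~4.2]{FL83}, with almost the same proof (based on approximations of the game by finite horizon truncations).'' Your proposal is therefore a reconstruction of exactly the route the paper gestures at --- finite-horizon truncation, backward induction, sequential compactness of $\Sigma_i$ and $\Sigma_{-i}$, and passage to the limit via Lemma~\ref{cont-exp} --- and its overall architecture is sound, including the observation that subgame optimality must be preserved at all (countably many) states simultaneously.

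There is, however, one step you assert rather than prove, and it is precisely the step where ``continuous'' differs from the ``continuous at infinity'' hypothesis under which the Fudenberg--Levine argument is usually run: the claim that the truncations can be chosen so that $\mathbb{E}_\sigma(u_i^T)\to\mathbb{E}_\sigma(u_i)$ \emph{uniformly} over $\sigma$. This does not follow from boundedness and continuity of $u_i$ alone, because $\mathcal{P}$ need not be compact: the action sets $A(s)$ are finite, but the successor sets $S(s,a)$ may be countably infinite, so the tree of histories can have infinite branching at chance nodes and $u_i$ need not be uniformly continuous on $\mathcal{P}$. Without uniformity (or some substitute), you cannot pass from $u_i^{T_k}(\sigma_i^{T_k},\tau_{-i}\,|\,s)\ge v_i^{T_k}(s)$ to $u_i(\sigma_i^{*},\tau_{-i}\,|\,s)\ge v_i(s)$, since the strategy and the payoff approximation vary simultaneously along the subsequence. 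The gap is repairable: the map $\sigma\mapsto\mathbb{P}_\sigma$ is weakly continuous (this is the content of Lemma~\ref{cont-exp}) and $\Sigma$ is compact, so $\{\mathbb{P}_\sigma:\sigma\in\Sigma\}$ is weakly compact and hence tight by Prokhorov's theorem; restricting attention to a compact set of plays carrying mass $1-\varepsilon$ under every $\mathbb{P}_\sigma$, one recovers the required uniform control of the truncation error (alternatively, one can invoke the version of the weak-convergence mapping theorem for a \emph{continuously convergent} sequence of bounded functions $u_i^{T_k}\to u_i$). You should either supply this argument or, as you suggest at the end, simply cite \cite[Theorem~1.2]{MS07} as the paper does; as written, the ``Since \ldots uniformly'' sentence is the one genuinely unjustified link in the chain.
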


For every player $i\in N$, every state $s\in S$ and every action $a\in A(s)$, define $$v_i(s,a)=\sum_{z\in S}\,q(s,a)(z)\cdot v_i(z).$$ This is in expectation the value for player $i$ in the subgame of $G$ that arises if player $i(s)$ chooses action $a$ in state $s$. Obviously, for the controlling player $i(s)$ we have 
\begin{equation}\label{contr-eq}
v_{i(s)} (s)=\max_{a\in A(s)} v_{i(s)}(s,a).
\end{equation} 
Here, the maximum is attained due to the finiteness of $A(s)$. Let us call an action $a\in A(s)$ optimal in state $s$ if $v_{i(s)}(s) = v_{i(s)}(s,a)$. We also have for every player $j\in N\setminus\{i(s)\}$ that 
\begin{equation}\label{noncontr-eq}
v_{j} (s)=\min_{a\in A(s)} v_{j}(s,a).
\end{equation}

\subsection{A Procedure to Derive a Restricted Game with Only Optimal Actions}

In this subsection, we define a procedure that inductively eliminates all actions that are not optimal and terminates with a specific game $G^\infty$, in which all actions are optimal.

Take a nonempty set $A'(s)\subseteq A(s)$ for every state $s\in S$. The sets $A'(s)$, for $s\in S$, induce a game $G'$ that we derive from $G$ as follows: the set of states $S'$ of $G'$ consists of those states $z\in S$ for which the unique history that starts at $\widetilde{s}$ and ends at $z$ only uses actions in the sets $A'(s)$, for $s\in S$. These are the states that a play can visit, with possibly probability zero, when the actions are restricted to the sets $A'(s)$, for $s\in S$. The action space of $G'$ in every state $s\in S'$ is then $A'(s)$. Further, the payoff functions of $G'$ are obtained by restricting the payoff functions of $G$ to plays corresponding to these new state and action spaces. 

Let $G^0$ be the game $G$, let $S^0=S$ and let $A^0(s)=A(s)$ for every $s\in S$. Then, at every state $s\in S$, we delete all actions that are not optimal. This results in a nonempty action space $A^1(s)\subseteq A^0(s)$ for every state $s\in S$. Let $G^1$ denote the induced game, with state space $S^1$. In the next step, at every state $s\in S^1$, we delete all actions from $A^1(s)$ that are not optimal in $G^1$. This gives a nonempty action space $A^2(s)\subseteq A^1(s)$ for every state $s\in S^1$. Let $G^2$ denote the induced game, with state space $S^2$. By proceeding this way, we obtain for every $k\in \mathbb{N}$ a game $G^k$ with state space $S^k$ and nonempty action spaces $A^k(s)$, for $s\in S^k$. Finally, let $S^\infty=\cap_{k\in\mathbb{N}} S^k$ and $A^\infty(s)=\cap_{k\in\mathbb{N}}A^k(s)$ for every $s\in S^\infty$. Note that the initial state $\widetilde{s}$ belongs to $S^\infty$, and also that, due to the finiteness of the action spaces, the sets $A^\infty(s)$, for $s\in S^\infty$, are nonempty. Let $G^\infty$ be the game induced by the sets $A^\infty(s)$, for $s\in S^\infty$. It is clear that the state space of $G^\infty$ is exactly $S^\infty$.

We now define a function $\phi:S\rightarrow \mathbb{N}\cup\{\infty\}$. Note that every state $s\in S$ belongs either to $S^k\setminus S^{k+1}$ for a unique $k\in\mathbb{N}$ or to $S^\infty$. In the former case, we define $\phi(s)=k$, whereas in the latter case we define $\phi(s)=\infty$. So, this is the latest iteration in which state $s$ is still included. For notational convenience, we extend the strategy space of player $i$ in $G^k$, for $k\in\mathbb{N}\cup\{\infty\}$, with strategies $\sigma_i$ for player $i$ in $G$ with the following property: for every $s\in S$, if $\phi(s)\geq k$ then $\sigma_i(s)\in A^k(s)$, whereas if $\phi(s)<k$ then $\sigma_i(s)$ is an arbitrary action in $A^{\phi(s)}(s)$. By doing so, every strategy in $G^k$ is also a strategy in $G^m$ if $m\leq k$.

Now we consider the subgames of the above defined games. For every $k\in\mathbb{N}\cup\{\infty\}$ and player $i\in N$, by Lemma~\ref{val-exists}, the game $G^k_i(s)$ has a value $v_i^k(s)$ for every $s\in S$. Moreover, player $i$ has a strategy $\sigma^k_i$ such that $\sigma^k_i$ induces an optimal strategy in the subgame $G^k_i(s)$ for every state $s\in S^k$. Similarly, the opponents of player $i$ have a strategy profile $\sigma^k_{-i}$ such that $\sigma^k_{-i}$ induces an optimal strategy profile in the subgame $G^k_i(s)$ for every state $s\in S^k$. Note that $\sigma^k_i$ can only use optimal actions in $G^k$, so by construction, $\sigma^k_i$ is also a strategy for player $i$ in the game $G^{k+1}$.

\begin{lemma} \label{inc_vk} We have $v_i^k(s)\leq v_i^{k+1}(s)$ for every $k\in\mathbb{N}$, player $i\in N$, and state $s\in S^{k+1}$.
\end{lemma}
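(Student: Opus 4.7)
The plan is to exploit the asymmetric effect of the action elimination on the two sides of the auxiliary zero-sum game $G_i^{k+1}$. Passing from $G_i^k$ to $G_i^{k+1}$ restricts the imaginary minimizing player $-i$ to a smaller action set at every state $z\in S^{k+1}$ he controls—namely those actions optimal for the true controlling opponent $i(z)\neq i$ in $G^k$—while the restriction on player $i$'s own states only removes actions that were already suboptimal for $i$ in $G^k$. Shrinking the minimizer's options can only help the maximizer, and the maximizer loses nothing he would ever have wanted to use, so player $i$ should be able to transplant an optimal strategy from $G_i^k$ into $G_i^{k+1}$ and still guarantee $v_i^k(s)$.

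Concretely, I would take the strategy $\sigma_i^k$ provided by Lemma~\ref{val-exists}, which is optimal in every subgame $G_i^k(z)$. At a state $z\in S^{k+1}$ controlled by player $i$, the action $\sigma_i^k(z)$ is optimal for $i$ in $G^k$ and hence lies in $A^{k+1}(z)$ by the very definition of the elimination step; at a state $z\in S^{k+1}$ not controlled by player $i$, the action $\sigma_i^k(z)$ is irrelevant to player $i$'s performance, so I redefine it arbitrarily to take values in $A^{k+1}(z)$. After this harmless modification, $\sigma_i^k$ becomes a valid strategy in $G^{k+1}$. For any opponent profile $\tau_{-i}$ in $G_i^{k+1}$, the inclusion $A^{k+1}(z)\subseteq A^k(z)$ lets us regard $\tau_{-i}$ as a valid opponent profile in $G_i^k$ as well, and the pair $(\sigma_i^k,\tau_{-i})$ induces identical probability measures on plays in the two games, hence identical expected payoffs for player $i$ in the subgame starting at $s$. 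By optimality of $\sigma_i^k$ in $G_i^k(s)$, this payoff is at least $v_i^k(s)$, so $\sigma_i^k$ guarantees $v_i^k(s)$ in $G_i^{k+1}(s)$, yielding $v_i^{k+1}(s)\geq v_i^k(s)$.

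The main obstacle, if one can call it that, is purely notational: navigating the extension convention for strategies so that $\sigma_i^k$ and $\tau_{-i}$ are simultaneously legitimate on all of $S$ in both $G^k$ and $G^{k+1}$, and checking that the induced play distribution on $s$-continuations is unaffected by this reinterpretation. Once this bookkeeping is pinned down, the inequality falls out immediately from the asymmetric effect of the elimination on the maximizer and minimizer of the zero-sum game $G_i$.
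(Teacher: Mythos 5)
Your proposal is correct and follows essentially the same route as the paper: the paper likewise observes that any opponent profile in $G^{k+1}$ remains valid in $G^k$, that $\sigma_i^k$ uses only optimal actions and hence is (under the stated extension convention) already a strategy in $G^{k+1}$, and concludes $u_i(\sigma_i^k,\tau_{-i}|s)\geq v_i^k(s)$ by optimality in $G_i^k(s)$. Your extra remarks about the asymmetric effect on the minimizer and the redefinition at states not controlled by player $i$ are harmless framing (a strategy for player $i$ is only defined on $S_i$ anyway), and the bookkeeping you flag is exactly what the paper's extension convention handles.
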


\begin{proof} Take a $k\in\mathbb{N}$, a player $i\in N$, and a state $s\in S^{k+1}$. Let $\tau_{-i}$ be an arbitrary strategy profile for player $i$'s opponents in the game $G^{k+1}$. Hence, $\tau_{-i}$ is also a strategy profile for them in the game $G^{k}$. Therefore, by the optimality of $\sigma^k_{i}$ in $G^k_i(s)$, we have $u_i(\sigma^k_{i},\tau_{-i}|s) \geq v_i^k(s)$. Because  $\sigma^k_i$ is also a strategy for player $i$ in the game $G^{k+1}$, it means that player $i$ can guarantee at least $v_i^k(s)$ in the game $G^{k+1}(s)$. Therefore $v_i^k(s)\leq v_i^{k+1}(s)$, as desired.
\qed\end{proof}

\begin{lemma} \label{goodlimit} We have $\lim_{k\rightarrow\infty}v_i^k(s) = v_i^{\infty}(s)$ for every player $i\in N$ and state $s\in S^\infty$.
\end{lemma}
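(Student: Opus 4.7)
The sequence $(v_i^k(s))_{k\in\mathbb{N}}$ is non-decreasing by Lemma~\ref{inc_vk} and bounded above since the payoffs are bounded, so the limit $L := \lim_{k\to\infty} v_i^k(s)$ exists. My plan is to show $v_i^\infty(s) = L$ by a compactness-and-limit argument: extract subsequential limits of the optimal strategies $\sigma_i^k$ and $\sigma_{-i}^k$ provided by Lemma~\ref{val-exists}, verify that these limits lie in $G^\infty$, and then pass to the limit on both sides of the defining optimality inequalities using the continuity of expected payoff.

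The compactness step is the first ingredient. The strategy space $\Sigma$ is compact and metrizable (as recorded in the Preliminaries), hence sequentially compact, so the sequence $\{(\sigma_i^k,\sigma_{-i}^k)\}_{k\in\mathbb{N}}$ has a convergent subsequence along some $(k_n)$, with limit $(\tilde\sigma_i,\tilde\sigma_{-i})$. I claim the limits restrict to strategies in $G^\infty$. For any $s\in S^\infty$, the chain $(A^k(s))_{k\in\mathbb{N}}$ is a decreasing sequence of finite sets, so there is an index $k_s$ with $A^k(s)=A^\infty(s)$ for all $k\geq k_s$; meanwhile, coordinatewise convergence in the discrete product topology amounts to eventual equality, so for $n$ sufficiently large $\tilde\sigma_i(s) = \sigma_i^{k_n}(s) \in A^{k_n}(s) = A^\infty(s)$, and the same argument applies to $\tilde\sigma_{-i}(s)$. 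Hence $\tilde\sigma_i$ and $\tilde\sigma_{-i}$ may be regarded as strategies in $G^\infty$.

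The two inequalities $v_i^\infty(s)\geq L$ and $v_i^\infty(s)\leq L$ now follow symmetrically. For the first, pick any strategy profile $\tau_{-i}$ for the opponents in $G^\infty$; since $A^\infty(s')\subseteq A^{k_n}(s')$ at every state $s'$, the profile $\tau_{-i}$ is also a valid profile in $G^{k_n}$, so optimality of $\sigma_i^{k_n}$ in $G_i^{k_n}(s)$ yields $u_i(\sigma_i^{k_n},\tau_{-i}\,|\,s)\geq v_i^{k_n}(s)$. Letting $n\to\infty$ and applying the subgame analogue of Lemma~\ref{cont-exp} (which goes through by the same proof, applied to the game starting at $s$) gives $u_i(\tilde\sigma_i,\tau_{-i}\,|\,s)\geq L$, so $\tilde\sigma_i$ guarantees $L$ to player $i$ in $G_i^\infty(s)$. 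The reverse inequality is obtained in the same way from $u_i(\tau_i,\sigma_{-i}^{k_n}\,|\,s)\leq v_i^{k_n}(s)$ for arbitrary $\tau_i$ in $G^\infty$. The main obstacle, and the only place where the finite action spaces are crucial here, is the verification that the subsequential limits remain strategies in $G^\infty$; without the discreteness of each $A(s)$, the limit action at some $s\in S^\infty$ could fall into $A(s)\setminus A^\infty(s)$ and the argument would collapse.
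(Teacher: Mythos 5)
Your proof is correct and follows essentially the same route as the paper: monotone convergence (via Lemma~\ref{inc_vk} and boundedness) for the existence of the limit, sequential compactness of $\Sigma$ to extract subsequential limits of the optimal strategies, and continuity of expected payoffs to pass to the limit in the two optimality inequalities. The only difference is that you spell out why the subsequential limits are strategies in $G^\infty$ (the finite nested action sets $A^k(s)$ stabilize at $A^\infty(s)$, and coordinatewise convergence in the discrete product topology is eventual equality), a step the paper asserts follows ``by construction''.
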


\begin{proof} Take a $k\in\mathbb{N}$, a player $i\in N$, and a state $s\in S^\infty$. By Lemma~\ref{inc_vk} and the boundedness of the payoffs, the sequence $(v_i^k(s))_{k\in\mathbb{N}}$ converges to some $r\in \mathbb{R}$. We need to show that $r=v_i^{\infty}(s)$.

Consider the sequences $(\sigma_{i}^k)_{k\in\mathbb{N}}$ and $(\sigma_{-i}^k)_{k\in\mathbb{N}}$. Since the strategy spaces in $G$ are sequentially compact, these sequences have convergent subsequences $(\sigma_{i}^k)_{k\in M}$ and $(\sigma_{-i}^k)_{k\in M}$, for some infinite subset $M\subseteq \mathbb{N}$. Let $\sigma_i$ and $\sigma_{-i}$ denote the limit strategies. It follows by construction that $\sigma_i$ is a strategy and respectively $\sigma_{-i}$ is a strategy profile in the game $G^\infty$.

Let $\tau_{-i}$ be an arbitrary strategy profile for
player $i$'s opponents in the game $G^\infty$. Then, $\tau_{-i}$ is also a strategy profile in the game $G^m$, for any $m\in M$. By the optimality of $\sigma_{i}^m$ in $G^m(s)$, we find $u_i(\sigma^m_{i},\tau_{-i}|s) \geq v_i^m(s)$ for all $m\in M$. If we take the limit when $m$ tends to infinity in the set $M$, Lemma~\ref{cont-exp} yields $u_i(\sigma_i,\tau_{-i}|s) \geq r$. Since $\sigma_i$ is a strategy in $G^\infty$, we have $v_i^\infty(s)\geq r$. It can be shown similarly, by using the strategy profile $\sigma_{-i}$, that $v_i^\infty(s)\leq r$. The proof is complete.
\qed\end{proof}

The next lemma shows that there is no need to continue the elimination procedure in a transfinite way.

\begin{lemma}\label{inf-game}
Every action is optimal in the game $G^\infty$, i.e. for every state $s\in S^\infty$ and action $a\in A^\infty(s)$, we have for the controlling player $i(s)$ that $v^\infty_{i(s)} (s)=v^\infty_{i(s)}(s,a).$ Moreover, if $\tau$ is a strategy profile in the game $G^\infty$, then $u_i(\tau|s)\geq v^\infty_i(s)$ for every player $i\in N$ and every state $s\in S^\infty$.
\end{lemma}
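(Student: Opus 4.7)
My plan is to treat the two assertions separately: the first by a limit argument in $k$, and the second by combining a submartingale with a switching construction. For the first assertion, fix $s\in S^\infty$ and $a\in A^\infty(s)=\bigcap_{k\in\mathbb{N}} A^k(s)$. By the definition of the elimination procedure, $a\in A^{k+1}(s)$ means $a$ is optimal in $G^k$, i.e.\ $v^k_{i(s)}(s,a)=v^k_{i(s)}(s)$, and this equality holds for every $k\in\mathbb{N}$. The right-hand side tends to $v^\infty_{i(s)}(s)$ by Lemma~\ref{goodlimit}. For the left-hand side, every successor $z\in S(s,a)$ belongs to $S^\infty$, because the unique path $\widetilde{s}\to\cdots\to s\to z$ uses only actions in $A^k(\cdot)$ for every $k$; Lemma~\ref{goodlimit} then gives $v^k_{i(s)}(z)\to v^\infty_{i(s)}(z)$ for each such $z$, and since the values $v^k_{i(s)}$ are uniformly bounded by $\sup|u_{i(s)}|$, dominated convergence lets me interchange the limit with the (at most countable) sum defining $v^k_{i(s)}(s,a)=\sum_{z} q(s,a)(z)\cdot v^k_{i(s)}(z)$. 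Hence $v^\infty_{i(s)}(s,a)=v^\infty_{i(s)}(s)$.

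For the second assertion I first upgrade the first part to an inequality that holds for all players: for every $s'\in S^\infty$, every $a\in A^\infty(s')$ and every $i\in N$, $v^\infty_i(s',a)\geq v^\infty_i(s')$. When $i=i(s')$ this is the first assertion; when $i\neq i(s')$ it follows from the minimax identity $v^\infty_i(s')=\min_{a\in A^\infty(s')}v^\infty_i(s',a)$, which is (\ref{noncontr-eq}) applied to $G^\infty_i$ via Lemma~\ref{val-exists}. Consequently, under any strategy profile $\tau$ in $G^\infty$ starting at $s\in S^\infty$, the sequence $M_n=v^\infty_i(s^n)$ is a bounded submartingale under $\mathbb{P}_\tau$, so $\mathbb{E}_\tau[v^\infty_i(s^N)]\geq v^\infty_i(s)$ for every $N$. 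The main obstacle is then to bridge from this inequality on the auxiliary quantity $v^\infty_i(s^N)$ to the desired inequality on the actual expected payoff $u_i(\tau|s)$.

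I handle this with a truncation-and-switch argument. By Lemma~\ref{val-exists} applied to $G^\infty$, fix a strategy $\sigma^*_i$ for player $i$ that is optimal in every subgame $G^\infty_i(s')$. For each $N$ define $\sigma^N_i$ to agree with $\tau_i$ at every state reachable from $s$ within at most $N$ periods, and with $\sigma^*_i$ everywhere else. Under $(\sigma^N_i,\tau_{-i})$ the first $N$ periods coincide in distribution with those under $\tau$, and from period $N$ onward player $i$ plays $\sigma^*_i$ against $\tau_{-i}$; by the Markov property and the optimality of $\sigma^*_i$,
\[
u_i(\sigma^N_i,\tau_{-i}|s)\;=\;\mathbb{E}_\tau\bigl[u_i(\sigma^*_i,\tau_{-i}|s^N)\bigr]\;\geq\;\mathbb{E}_\tau[v^\infty_i(s^N)]\;\geq\;v^\infty_i(s).
\]
As $N\to\infty$, $\sigma^N_i\to\tau_i$ pointwise on $S_i$, so $(\sigma^N_i,\tau_{-i})\to\tau$ in the product topology, and Lemma~\ref{cont-exp} (applied to the subgame starting at $s$) yields $u_i(\sigma^N_i,\tau_{-i}|s)\to u_i(\tau|s)$. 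Passing to the limit in the displayed inequality concludes $u_i(\tau|s)\geq v^\infty_i(s)$.
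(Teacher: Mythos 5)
Your proof is correct and follows essentially the same route as the paper's: the first assertion by passing to the limit in the identities $v^k_{i(s)}(s)=v^k_{i(s)}(s,a)$ via Lemma~\ref{goodlimit}, and the second by the same truncate-and-switch construction (the paper's profile $\tau^m$ is exactly your $(\sigma^N_i,\tau_{-i})$) combined with the non-decreasing expectation of $v^\infty_i(s^\ell)$ from~(\ref{noncontr-eq}) and a final appeal to Lemma~\ref{cont-exp}. Your explicit dominated-convergence step and the submartingale phrasing only make explicit what the paper leaves implicit.
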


\begin{proof}

Take a state $s\in S^\infty$ and an action $a\in A^\infty(s)$. Then, $a\in A^k(s)$ for all $k\in \mathbb{N}$, and therefore $v^k_{i(s)}(s)=v_{i(s)}^k(s,a)$ for all $k\in \mathbb{N}$. By Lemma~\ref{goodlimit}, we obtain $v_{i(s)}^\infty(s)=v_{i(s)}^\infty(s,a)$, so the first part of the lemma follows.

Now take a strategy profile $\tau$ in the game $G^\infty$, a player $i$, and a state $s\in S^\infty$. Assume that state $s$ can be reached at period $n$ starting from the initial state $\widetilde{s}$. For $m>n$, let $\tau^m$ be the strategy profile that prescribes to play as follows: (1) at periods $0,1,\ldots,n-1$, arbitrary actions can be chosen in $G^\infty$, (2) at periods $n,n+1,\ldots,m-1$, the players follow $\tau$, (3) at periods $m,m+1,\ldots$, player $i$ follows the strategy $\sigma^\infty_i$ and all other players follow their strategies in $\tau_{-i}$. Let $s^\ell$ denote the random variable for the state at period $\ell\in\mathbb{N}$ according to $\tau^m$. By the first part of the lemma and by Equality~(\ref{noncontr-eq}), we obtain that $\mathbb{E}_{\tau^m}\,\{v^\infty_i(s^\ell)|s^n=s\}$ is non-decreasing for $\ell\in\{n,n+1,\ldots,m\}$. Hence,
\begin{eqnarray*}
u_i(\tau^m|s) &=& \sum_{z\in S^\infty} \mathbb{P}_{\tau^m}\,\{s^m=z\,|\,s^n=s\}\cdot u_i(\sigma_i^\infty,\tau_{-i}|z)\\ 
&\geq& \sum_{z\in S^\infty} \mathbb{P}_{\tau^m}\,\{s^m=z\,|\,s^n=s\}\cdot v^\infty_i(z)\\
 &=& \mathbb{E}_{\tau^m}\,\{v^\infty_i(s^m)\,|\,s^n=s\} \,\geq\, v^\infty_i(s).
\end{eqnarray*}
As the strategy profile $\tau^m$ converges to $\tau$ as $m$ tends to infinity from state $s$, by Lemma~\ref{cont-exp} we obtain $u_i(\tau|s)\geq v^\infty_i(s)$, as desired.
\qed\end{proof}

\subsection{The Secure Equilibrium}

Now we argue that thanks to the properties of this new game $G^\infty$, we are able to identify an interesting strategy profile in this game, which can be enhanced with punishment strategies in order to become a secure equilibrium in the original game $G$.
	
Recall that $\sigma_{-i}$ and respectively $\sigma_{-i}^\infty$ are strategy profiles for player $i$'s opponents in $G$ and respectively in $G^\infty$, such that they induce an optimal strategy profile in $G_i(s)$ and respectively $G_i^\infty(s')$, for every state $s\in S$ and every state $s' \in S^\infty$. For each player $j\in N\setminus\{i\}$, let $\sigma_{-i,j}$ and respectively $\sigma^\infty_{-i,j}$ be player $j$'s strategy in these strategy profiles. These strategies will play the role of punishing player $i$ if player $i$ deviates. 

Let $\rho^\ast$ be a strategy profile in $G^\infty$ which minimizes the sum of the expected payoffs of all the players, i.e. for any strategy profile $\rho$ in $G^\infty$ we have $\sum_{i\in N}u_i(\rho^*)\leq \sum_{i\in N}u_i(\rho)$. Such a strategy profile exists due to the compactness of the set of strategy profiles in $G^\infty$ and due to Lemma~\ref{cont-exp}. The idea is to define a secure equilibrium in $G$ in the following way: follow $\rho^\ast$, unless a deviation occurs. If player $i$ deviates, then his opponents should punish player $i$ with $\sigma^\infty_{-i}$ as long as he chooses actions in $G^\infty$, and they should punish him with $\sigma_{-i}$ as soon as he chooses an action out of $G^\infty$. Let us specify this strategy profile more precisely.

We first define a function $L$, which assigns to each state the player who has to be punished from this state, or $\bot$ if nobody has to be punished. The idea here is to remember the first player who deviated from the strategy profile $\rho^\ast$. For the initial state $\widetilde{s}$, we have $L(\widetilde{s})=\bot$. For other states, we define it by induction. Suppose that we have defined $L(s)$ for some state $s\in S$. Then, for a state $s'\in S(s,a)$, where $a\in A(s)$, we set:
  \begin{equation*}
    L(s') :=
    \begin{cases}
      \bot & \mbox{ if $L(s)= \bot$ and $\rho^*_{i(s)}(s)=a$,}\\

      i(s) & \mbox{ if $L(s)= \bot$ and $\rho^*_{i(s)}(s)\in A(s)\setminus\{a\}$,}\\

      L(s) & \mbox{ otherwise (i.e. when $L(s) \not=\bot$).}
    \end{cases}
  \end{equation*}
Now we define a strategy $\tau_j$ for every player $j\in N$ as follows: for every state $s\in S_j$ let 
  \begin{equation*}
    \tau_j(s) :=
    \begin{cases}
      \rho^*_j(s) & \mbox{ if $L(s)=\bot$,}\\

      \text{arbitrary action in $A^\infty(s)$} & \mbox{ if $L(s)=j$ and $s\in S^\infty$,}\\
			
			\text{arbitrary action in $A(s)$} & \mbox{ if $L(s)=j$ and $s\in S\setminus S^\infty$,}\\

      \sigma_{-i,j}^\infty(s) & \mbox{ if $L(s)=i\neq j$ and $s\in S^\infty$,}\\

      \sigma_{-i,j}(s) & \mbox{ if $L(s)=i\neq j$ and $s\in S\setminus S^\infty$.}
    \end{cases}
  \end{equation*}
We will show that $\tau=(\tau_j)_{j\in N}$ is a secure equilibrium in $G$. We remark that in the second case, when $L(s)=j$ and $s\in S^\infty$, it is not necessary to have any restriction on the action $\tau_j(s)$, and we only require $\tau_j(s)\in A^\infty(s)$ because it simplifies the arguments of the proof. 

Consider a strategy $\tau_i'$ for player~$i$.
The following lemma says that, in any state $s\in S^\infty$, it is strictly worse for player $i$ to deviate to an action outside $A^\infty(s)$, unless another player deviated before him (that case is irrelevant for our goal to show that $\tau$ is a secure equilibrium).

\begin{lemma}\label{claim1}
For each state $s\in S^\infty_i$ such that $L(s)\in\{\bot,i\}$ and $\tau'_i(s)\in A(s)\setminus A^\infty(s)$, we have $u_i(\tau'_i,\tau_{-i}|s)<u_i(\tau|s)$. 
\end{lemma}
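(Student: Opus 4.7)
The plan is to show that when player $i$ deviates by picking $a := \tau'_i(s) \notin A^\infty(s)$ at the state $s \in S^\infty_i$, the profile $\tau_{-i}$ switches to the punishment strategy $\sigma_{-i}$ and stays on it forever, so the continuation payoff of player $i$ is bounded above by $v_i(s,a)$, whereas $u_i(\tau|s)$ is bounded below by $v_i^\infty(s)$; the remaining work is to establish $v_i(s,a) < v_i^\infty(s)$. To set this up, note that $\rho^*$ is a profile in $G^\infty$, so $\rho^*_i(s) \in A^\infty(s)$ while $a \notin A^\infty(s)$, hence $a \neq \rho^*_i(s)$. Combined with the hypothesis $L(s) \in \{\bot, i\}$, the inductive definition of $L$ gives $L(s'') = i$ for every $s'' \in S(s,a)$. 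Moreover, since each state has a unique history from $\widetilde{s}$ and this history for $s''$ uses the action $a \notin A^\infty(s)$, no such $s''$ lies in $S^\infty$, and the histories of all later reachable states still contain $a$, so they remain outside $S^\infty$ as well. Consequently $L$ stays equal to $i$ and the play remains outside $S^\infty$ from $s''$ onwards, so $\tau_{-i}$ coincides with $\sigma_{-i}$ throughout. By optimality of $\sigma_{-i}$ in the subgame $G_i(s'')$ (Lemma~\ref{val-exists}), $u_i(\tau'_i, \tau_{-i}|s'') \leq v_i(s'')$, and averaging over $s''$ yields $u_i(\tau'_i, \tau_{-i}|s) \leq v_i(s,a)$.

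Next I would compare $v_i(s,a)$ with $v_i^\infty(s)$. Let $k \in \mathbb{N}$ be the unique index with $a \in A^k(s) \setminus A^{k+1}(s)$. By construction $a$ is not optimal in $G^k$ at $s$, so $v_i^k(s,a) < v_i^k(s)$. For each $\ell \in \{0, \ldots, k-1\}$ and each $z \in S(s,a)$ we have $a \in A^{\ell+1}(s)$ and $s \in S^{\ell+1}$, hence $z \in S^{\ell+1}$, so Lemma~\ref{inc_vk} applies and iterating gives $v_i(z) = v_i^0(z) \leq v_i^k(z)$. Therefore $v_i(s,a) \leq v_i^k(s,a) < v_i^k(s) \leq v_i^\infty(s)$, the last inequality following from Lemma~\ref{inc_vk} together with Lemma~\ref{goodlimit}. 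To close the proof, I would observe that starting from $s$ with $L(s) \in \{\bot, i\}$ the profile $\tau$ never leaves $G^\infty$: when $L = \bot$ all players use $\rho^*$, a profile in $G^\infty$, and when $L = i$ player $i$ plays in $A^\infty$ while his opponents follow $\sigma_{-i}^\infty$, which is also a profile in $G^\infty$; so the induced profile in the subgame at $s$ lies in $G^\infty(s)$, and Lemma~\ref{inf-game} yields $u_i(\tau|s) \geq v_i^\infty(s)$. Combining the two bounds produces the strict inequality $u_i(\tau'_i, \tau_{-i}|s) < u_i(\tau|s)$.

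The main obstacle is the bookkeeping around leaving $G^\infty$: I need to confirm that after the deviation at $s$ the label $L$ really stays at $i$ and the state really never returns to $S^\infty$, so that opponents use $\sigma_{-i}$ rather than $\sigma_{-i}^\infty$ at every subsequent state; and I need the monotonicity $v_i^0(z) \leq v_i^k(z)$ for $z \in S(s,a)$, for which the membership $z \in S^{\ell+1}$ for every $\ell \leq k-1$ is essential. Once these two points are settled, the chain of inequalities closes the argument.
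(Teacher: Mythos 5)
Your proof is correct and follows essentially the same route as the paper's: bound the deviation payoff above by $v_i(s,a)$ via the punishment profile $\sigma_{-i}$, chain $v_i(s,a)\leq v_i^k(s,a)<v_i^k(s)\leq v_i^\infty(s)$ using Lemmas~\ref{inc_vk} and~\ref{goodlimit}, and bound $u_i(\tau|s)$ below by $v_i^\infty(s)$ via Lemma~\ref{inf-game}. You merely spell out a few details the paper leaves implicit (why $L$ becomes and stays $i$, and the membership $z\in S^{\ell+1}$ needed for the monotonicity step), which is fine.
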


\begin{proof}
Suppose the opposite, and let $a=\tau'_i(s)$. Then, there is a $k\in\mathbb{N}$ such that $a\in A^k(s)\setminus A^{k+1}(s)$. In particular, $v^{k}_i(s)>v^k_i(s,a)$. Moreover, because $L(z)=i$ and $z\in S\setminus S^\infty$ for every state $z\in S(s,a)$, the strategy profile $\tau_{-i}$ tells player $i$'s opponents to follow $\sigma_{-i}$ from every state $z\in S(s,a)$. Hence, by Lemmas~\ref{inc_vk} and~\ref{goodlimit}
\begin{eqnarray*}
u_i(\tau'_i,\tau_{-i}|s) &=& \sum_{z\in S(s,a)}q(s,a)(z)\cdot u_i(\tau'_i,\sigma_{-i}|z)\,\leq\,\sum_{z\in S(s,a)}q(s,a)(z)\cdot v_i(z)\\
&= & v_i(s,a) \,\leq\, v^k_i(s,a) \,<\, v^k_i(s) \,\leq\,v^\infty_i(s) .
\end{eqnarray*}

Notice that $v^\infty_i(s)\,\leq\,u_i(\tau|s)$. Indeed, if $L(s)=\bot$ then $u_i(\tau|s)=u_i(\rho^*|s)\geq v^\infty_i(s)$ by Lemma~\ref{inf-game}. Now, if $L(s)=i$, then when playing according to $\tau$ the following properties hold in the subgame $G(s)$: only states in $S^\infty$ are visited, player $i$ plays the action $\tau_i(s')\in A^\infty(s')$ in all states $s'\in S^\infty_i$ that are reached after $s$, and player $i$'s opponents follow $\sigma^\infty_{-i}$. Hence, in the subgame $G(s)$, the strategy profile $\tau$ does not leave $G^\infty$ when it is played, so by Lemma~\ref{inf-game}, we obtain again that $u_i(\tau|s)\geq v^\infty_i(s)$. 

By combining the above, we find $u_i(\tau'_i,\tau_{-i}|s)<v^\infty_i(s)\,\leq\,u_i(\tau|s)$, as desired.
\qed\end{proof}

With the help of the above lemma, we can handle deviations to actions outside $G^\infty$. The next lemma is about deviations to actions inside $G^\infty$. It claims that player $i$ does not get a better payoff if he deviates to an action inside the game $G^\infty$, given no deviation has occurred before.

\begin{lemma}\label{claim2}
For each state $s\in S^\infty_i$ such that $L(s)=\bot$ and $\tau'_i(s)\in A^\infty(s)\setminus\{\rho^*_{i}(s)\}$, we have $u_i(\tau'_i,\tau_{-i}|s)\leq u_i(\tau|s)$. 
\end{lemma}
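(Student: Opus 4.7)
The plan is to establish the two-sided bound $u_i(\tau'_i, \tau_{-i}|s) \le v^\infty_i(s) \le u_i(\tau|s)$.

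The right inequality is straightforward: since $L(s)=\bot$, the profile $\tau$ coincides with $\rho^*$ throughout the subgame $G(s)$ and so stays in $G^\infty$, and Lemma~\ref{inf-game} yields $u_i(\tau|s)=u_i(\rho^*|s)\ge v^\infty_i(s)$. For the left inequality, set $a:=\tau'_i(s)\in A^\infty(s)\setminus\{\rho^*_i(s)\}$. Then $S(s,a)\subseteq S^\infty$, $L(z)=i$ for every $z\in S(s,a)$, and by Lemma~\ref{inf-game} the action $a$ is optimal in $G^\infty$ at $s$, so $v^\infty_i(s)=\sum_{z\in S(s,a)}q(s,a)(z)\,v^\infty_i(z)$. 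Since also $u_i(\tau'_i,\tau_{-i}|s)=\sum_{z\in S(s,a)}q(s,a)(z)\,u_i(\tau'_i,\tau_{-i}|z)$, the proof reduces to showing
$$ (\star) \qquad u_i(\tau'_i,\tau_{-i}|z)\le v^\infty_i(z) \text{ for every } z\in S^\infty \text{ with } L(z)=i. $$

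To establish $(\star)$ I would approximate $\tau'_i$ by strategies $\tau'^m_i$, $m\in\mathbb{N}$, that agree with $\tau'_i$ for the first $m$ periods of the subgame $G(z)$ and then switch to the optimal strategy $\sigma^\infty_i$ (at later states in $S^\infty$) or $\sigma_i$ (at later states in $S\setminus S^\infty$) provided by Lemma~\ref{val-exists}; outside $G(z)$ I set $\tau'^m_i=\tau'_i$. The claim is that $u_i(\tau'^m_i,\tau_{-i}|z)\le v^\infty_i(z)$ for every $z\in S^\infty$ with $L(z)=i$ and every $m\ge 0$, which I would prove by induction on $m$. The base case $m=0$ is immediate because $\tau_{-i}$ prescribes $\sigma^\infty_{-i}$ on the entire subgame $G(z)$ (as $L$ stays equal to $i$ and play stays in $S^\infty$), so the profile actually played is $(\sigma^\infty_i,\sigma^\infty_{-i})$, whose value in $G^\infty_i(z)$ is exactly $v^\infty_i(z)$.

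For the inductive step, I would condition on the action $b$ played at $z$ and the next state $z'\in S(z,b)$. If $i(z)\ne i$ then $b=\sigma^\infty_{-i}(z)\in A^\infty(z)$, and optimality of $\sigma^\infty_{-i}$ in $G^\infty_i(z)$ gives $v^\infty_i(z)=\sum_{z'}q(z,b)(z')\,v^\infty_i(z')$; the induction hypothesis at $z'\in S^\infty$ with $L(z')=i$ closes the case. If $i(z)=i$ and $\tau'_i(z)=a\in A^\infty(z)$, the same argument works, now using Lemma~\ref{inf-game} to get $v^\infty_i(z)=\sum_{z'}q(z,a)(z')\,v^\infty_i(z')$. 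If $i(z)=i$ and $\tau'_i(z)=a\notin A^\infty(z)$, then each $z'\in S(z,a)$ lies in $S\setminus S^\infty$ with $L(z')=i$, so $\tau_{-i}$ coincides with $\sigma_{-i}$ throughout $G(z')$; by optimality of $\sigma_{-i}$ in $G$ one obtains $u_i(\tau'^{m-1}_i,\tau_{-i}|z')\le v_i(z')$, and summing, together with $v_i(z,a)\le v_i(z)$ and the bound $v_i\le v^\infty_i$ from Lemmas~\ref{inc_vk} and~\ref{goodlimit}, yields $u_i(\tau'^m_i,\tau_{-i}|z)\le v_i(z)\le v^\infty_i(z)$. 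Finally, as $m\to\infty$ the strategies $\tau'^m_i$ converge pointwise to $\tau'_i$, so Lemma~\ref{cont-exp} gives $u_i(\tau'^m_i,\tau_{-i}|z)\to u_i(\tau'_i,\tau_{-i}|z)$ and passage to the limit yields $(\star)$. The main delicate point is the third sub-case: after an exit the only available bound is in terms of $v_i$ rather than $v^\infty_i$, and it is essential that Lemmas~\ref{inc_vk} and~\ref{goodlimit} supply $v_i\le v^\infty_i$ so the estimate can still be compared with $v^\infty_i(z)$ at the parent state.
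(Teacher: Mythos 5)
Your proof is correct, and it follows the same overall skeleton as the paper (sandwiching via $u_i(\tau'_i,\tau_{-i}|s)\le v^\infty_i(s)\le u_i(\tau|s)$, with the right inequality handled identically via Lemma~\ref{inf-game}), but the mechanism for the left inequality is genuinely different. The paper first \emph{repairs} $\tau'_i$: it takes the antichain $Z$ of first states where $\tau'_i$ would leave $A^\infty$, replaces $\tau'_i$ by $\tau_i$ from those states on to obtain $\mu_i$, and invokes Lemma~\ref{claim1} to conclude $u_i(\tau'_i,\tau_{-i}|s)\le u_i(\mu_i,\tau_{-i}|s)$; the repaired profile $(\mu_i,\tau_{-i})$ then never leaves $G^\infty$ from $s$, so a single appeal to the optimality of $\sigma^\infty_{-i}$ in $G^\infty_i(s)$ finishes the job. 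You never use Lemma~\ref{claim1}: instead you run a horizon-$m$ truncation with a backward induction, treating exits from $G^\infty$ inline via $v_i\le v^\infty_i$ (which is exactly the monotonicity from Lemmas~\ref{inc_vk} and~\ref{goodlimit} that underlies Lemma~\ref{claim1}, used here only in its weak form), and then pass to the limit with Lemma~\ref{cont-exp}. The paper's route is shorter because it reuses the preceding lemma and applies the minimax property only once at the root; yours is more self-contained and makes explicit where the value monotonicity enters, at the cost of repeating a truncation-and-limit argument of the same flavour as the one already carried out in Lemma~\ref{inf-game}. Two presentational points, neither a real gap: the truncated strategies should be indexed by the base state as well as by $m$ (the strategy appearing at $z'$ in your inductive step is the $(m-1)$-truncation \emph{relative to} $z'$), and the identity $v^\infty_i(z)=v^\infty_i(z,b)$ for $b=\sigma^\infty_{-i}(z)$ in your first sub-case relies on the subgame-optimality of $\sigma^\infty_{-i}$ supplied by Lemma~\ref{val-exists} together with Equality~(\ref{noncontr-eq}).
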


\begin{proof}

Suppose that state $s\in S^\infty_i$ is such that $L(s)=\bot$ and $\tau'_i(s)\in A^\infty(s)\setminus\{\rho^*_{i}(s)\}$.

Let $Z$ denote the set of states $z\in S^\infty_i$ such that (1) $L(z)\in\{\bot,i\}$ and $\tau'_i(z)\in A(z)\setminus A^\infty(z)$, and (2) $z$ is minimal with property (1) in the sense that there is no state $z'\neq z$ on the unique path from the initial state $\widetilde{s}$ to $z$ with property (1). Let $\mu_i$ be the strategy for player $i$ as follows: $\mu_i$ coincides with $\tau'_i$ as long as no state in $Z$ is reached, and as soon as a state $z\in Z$ is reached, $\mu_i$ follows $\tau_i$ in the subgame $G(z)$. In view of Lemma~\ref{claim1}, we have $u_i(\tau'_i,\tau_{-i}|s)\leq u_i(\mu_i,\tau_{-i}|s)$. Therefore, it suffices to show that $u_i(\mu_i,\tau_{-i}|s)\leq u_i(\tau|s)$.

Since $L(s)=\bot$, no state in $Z$ is visited on the unique path from the initial state $\widetilde{s}$ to $s$. Hence, $\tau'_i(s)=\mu_i(s)$. Let $a$ denote this action. Then, for any state $w\in S(s,a)$, we have $L(w)=i$ and $w\in S^\infty$. So, when playing according to $(\mu_i,\tau_{-i})$ the following properties hold in the subgame $G(s)$: only states in $S^\infty$ are visited, player $i$'s action $\mu_i(s')$ belongs to $A^\infty(s')$ in all states $s'\in S^\infty_i$ that are reached after $s$, and player $i$'s opponents follow $\sigma^\infty_{-i}$. Hence, in the subgame $G(s)$, the strategy profile $(\mu_i,\tau_{-i})$ does not leave $G^\infty$ when it is played, so by the definition of $\sigma^\infty_{-i}$, we obtain that $u_i(\mu_i,\tau_{-i}|s)\leq v^\infty_i(s)$. 

Since $s\in S^\infty$, we have $u_i(\tau|s)\geq v^\infty_i(s)$ by Lemma~\ref{inf-game}. Therefore, $u_i(\mu_i,\tau_{-i}|s)\leq v^\infty_i(s)\leq u_i(\tau|s)$, as desired.
\qed\end{proof}

Now we prove that $\tau$ is a secure equilibrium. It follows easily from Lemmas~\ref{claim1} and~\ref{claim2} that $u_i(\tau'_i,\tau_{-i})\leq u_i(\tau)$. This means that $\tau$ is a Nash equilibrium.

Furthermore, consider the case where $u_i(\tau'_i,\tau_{-i})= u_i(\tau)$. Then, Lemma~\ref{claim1} implies that it has probability zero that a state $s\in S^\infty_i$ with $\tau'_i(s)\in A(s)\setminus A^\infty(s)$ is reached under $(\tau'_i,\tau_{-i})$. Consequently, when playing according to $(\tau'_i,\tau_{-i})$ the following properties hold in the game $G$: (1) only states in $S^\infty$ are visited, and (2) in all states $s\in S^\infty_i$ that are reached, player $i$'s action $\tau'_i(s)$ belongs to $A^\infty(s)$, and (3) in all states $s\in S^\infty_j$, where $j\in N\setminus\{i\}$, that are reached, player $j$ plays the action given by $\rho^*_{j}(s)$ if $L(s)=\bot$ and by $\sigma^\infty_{-i,j}(s)$ if $L(s)=i$. Hence, the strategy profile $(\tau'_i,\tau_{-i})$ does not leave $G^\infty$ when it is played. As $u_i(\tau'_i,\tau_{-i})= u_i(\tau)$, the definition of $\rho^*$ yields 
$$\sum_{j\in N\setminus\{i\}}u_j(\tau'_i,\tau_{-i})\geq \sum_{j\in N\setminus\{i\}}u_j(\rho^*)=\sum_{j\in N\setminus\{i\}}u_j(\tau).$$
Thus, $\tau$ is indeed a secure equilibrium in the game $G$, and the proof of Theorem~\ref{main-th1} is complete.

\section{The Proof of Theorem \ref{main-th2}}\label{sec:proof2}

In this section, we provide a proof for Theorem~\ref{main-th2}. Consider a perfect information game $G$ with deterministic transitions, with arbitrary state and action spaces. Assume that the payoff functions are Borel measurable and have a finite range $M$, i.e. every player $i$'s payoff function $u_i$ is only taking values in $M$. Assume also that $M$ contains at least two elements, otherwise the game is trivial. 

Let $R=\max_{m\in M}|m|$ and $d=\min_{m,m'\in M,\,m\neq m'}|m-m'|$, and then choose $\delta=\tfrac{d}{2|N|R}$. We denote by $G^\delta$ the game $G$ with a new payoff function $u_i^\delta$ for every  player $i\in N$, defined  as follows: for every play $p\in\mathcal{P}$, let 
$$u^\delta_i(p)=u_i(p)-\delta\cdot\sum_{j\in N,\,j\neq i}u_j(p).$$ Notice that, for two plays $p,p'\in\mathcal{P}$, if we have $u_i(p)< u_i(p')$ then 
$$u_i^\delta(p') - u_i^\delta(p) = (u_i(p')-u_i(p))-\delta\cdot\sum_{j\in N,\,j\neq i}(u_j(p')-u_j(p)) \geq d-\delta\cdot (|N|-1)\cdot 2R > 0,$$
so $u_i^\delta(p)< u_i^\delta(p')$ holds too. Consequently, $u_i^\delta(p)\geq u_i^\delta(p')$ implies $u_i(p)\geq u_i(p')$.

Now suppose that $\sigma^\ast$ is a Nash equilibrium in $G^\delta$. Then, due to the previous observation, $\sigma^\ast$ is also a Nash equilibrium in the original game $G$. Now we show that $\sigma^\ast$ is a secure equilibrium in $G$. So, suppose that $\tau_i$ is a strategy for some player $i\in N$ such that $u_i(\tau_i,\sigma^\ast_{-i})=u_i(\sigma^\ast)$. Since $\sigma^\ast$ is a Nash equilibrium in $G^\delta$, we also have $u_i^\delta(\tau_i,\sigma^\ast_{-i})\leq u_i^\delta(\sigma^\ast)$. Hence $$\sum_{j\in N,\,j\neq i}u_j(\tau_i,\sigma^\ast_{-i})\geq \sum_{j\in N,\,j\neq i}u_j(\sigma^\ast),$$ which proves that $\sigma^\ast$ is a secure equilibrium in $G$ indeed.

It remains to prove that $G^\delta$ admits a Nash equilibrium. We only provide a sketch, since similar constructions are well known (cf. the result of Mertens and Neyman in \cite{Mer87}, and also \cite{TR97}), and many of these ideas also appeared in the proof of Theorem~\ref{main-th1}. The important property of $G^\delta$ is that the payoff functions $u^\delta_i$, for $i\in N$, are Borel measurable and have a finite range. By applying a corollary of Martin \cite{Mar75}, for any $i \in N$, player $i$ has a subgame-perfect optimal strategy $\sigma_i$ and his opponents have a subgame-perfect optimal strategy profile $\sigma_{-i}$ in the zero-sum game $G_i$, in which player $i$ maximizes $u^\delta_i$ and the other players are jointly minimizing $u^\delta_i$. It can be checked easily that the following strategy profile is a Nash equilibrium in $G^\delta$: every player $i$ should use the strategy $\sigma_i$. As soon as a player deviates, say player $i$ plays another action, then the other players should punish player $i$ in the remaining game by switching to the strategy profile $\sigma_{-i}$.

\section{Concluding Remarks}\label{sec:ccl}

\noindent\textbf{Lexicographic objectives:} In the proof of Theorem~\ref{main-th1}, we in fact showed the existence of a sum-secure equilibrium (see Remark~\ref{rem-sum-secure}). A very similar proof can be given to show that, if $F: \mathbb{R}^{|N|} \rightarrow \mathbb{R}$ is a continuous and bounded function, then there exists a Nash equilibrium $\sigma^\ast$ such that the following property holds for every player $i\in N$: if $\tau_i$ is a strategy for player $i$ such that $u_i(\tau_i,\sigma^\ast_{-i})=u_i(\sigma^\ast)$ then $F(u_j(\sigma^\ast)_{j\in N})\leq F(u_j(\sigma_i,\sigma^\ast_{-i})_{j\in N})$. In fact, this is closely related to lexicographic preferences: each player's first objective is to maximize his payoff, but in case of a tie between strategies, the secondary objective is to minimize the function $F$.\smallskip


\noindent\textbf{Subgame-perfect secure equilibrium:} Brihaye et al \cite{BBDG13} introduced the concept of subgame-perfect secure equilibrium, and showed its existence in two-player quantitative reachability games. We do not know if Theorem~\ref{main-th1} can be extended to subgame-perfect secure equilibrium. Perhaps it is possible to make use of the recently developed techniques for subgame-perfect equilibria in \cite{FKMSSV10,PS11}. 

However, Theorem~\ref{main-th2} cannot be extended to subgame-perfect secure equilibrium, as even a subgame-perfect equilibrium does not always exist in perfect information games with deterministic transitions and finitely many payoffs, as is shown by an example in~\cite{SV03}.\smallskip

\noindent\textbf{Algorithmic result:} Let us consider multi-player quantitative reachability games played on a finite graph, with deterministic transitions, where each payoff is determined by the number of moves it takes to get in a particular set of states. As a corollary of Theorem~\ref{main-th1} and some results of \cite{BBDG13} (the proof of Theorem~4.1, Proposition~4.5 and Remark~4.7), we derive an algorithm to obtain, in {\textsc{ExpSpace}}, a secure equilibrium such that finite payoffs are bounded by $2 \cdot |N| \cdot |S|$ in such games. We intend to further investigate algorithmic questions for other classes of objectives.\medskip

\noindent\textbf{Acknowledgment:} We would like to thank Dario Bauso, Thomas Brihaye, V\'{e}ronique Bruy\`{e}re and Guillaume Vigeral for valuable discussions on the concept of secure equilibrium.

\bibliographystyle{abbrv}
\bibliography{secure}

\end{document}